\newtheorem{theorem}{Theorem}
\newtheorem{corollary}{Corollary}
\newtheorem{lemma}{Lemma}
\newtheorem{remark}{Remark}
\providecommand{\Cov}{{\rm Cov}}
\providecommand{\xv}{\mathbf{x}} 
 \providecommand{\vv}{\mathbf{v}}
 \providecommand{\Uv}{\mathbf{U}}
\providecommand{\Vv}{\mathbf{V}} \providecommand{\Xv}{\mathbf{X}}
\providecommand{\hv}{\mathbf{h}} \providecommand{\gv}{\mathbf{g}}
\providecommand{\ev}{\mathbf{e}} \providecommand{\cv}{\mathbf{c}}
 \providecommand{\Yc}{{\mathcal Y}}
\providecommand{\Wc}{{\mathcal W}} 
 \providecommand{\Cc}{{\mathcal C}}
\providecommand{\Pc}{{\mathcal P}} \providecommand{\Rc}{{\mathcal R}}
 \providecommand{\Sc}{{\mathcal S}}
\providecommand{\Yt}{\tilde{Y}} \providecommand{\yt}{\tilde{y}}
\providecommand{\Zt}{\tilde{Z}} \providecommand{\zt}{\tilde{z}}
\providecommand{\tr}{{\rm tr}}
\title{Multi-Antenna Gaussian Broadcast Channels with Confidential Messages}
\author{
\authorblockN{Ruoheng Liu and H. Vincent Poor}
\authorblockA{Department of Electrical Engineering, Princeton University,
Princeton, NJ 08544 \\
Email: \{rliu,poor\}@princeton.edu}
\thanks{This research was supported by the National Science Foundation under Grants
ANI-03-38807, CNS-06-25637 and CCF-07-28208.}%
}
\begin{document}

\maketitle
\thispagestyle{empty}


\begin{abstract}

In wireless data networks, communication is particularly susceptible to
eavesdropping due to its broadcast nature. Security and privacy systems have
become critical for wireless providers and enterprise networks. This paper
considers the problem of secret communication over a Gaussian broadcast
channel, where a multi-antenna transmitter sends independent confidential
messages to two users with \emph{information-theoretic secrecy}. That is, each
user would like to obtain its own confidential message in a reliable and safe
manner. This communication model is referred to as the multi-antenna Gaussian
broadcast channel with confidential messages (MGBC-CM). Under this
communication scenario, a secret dirty-paper coding scheme and the
corresponding achievable secrecy rate region are first developed based on
Gaussian codebooks. Next, a computable Sato-type outer bound on the secrecy
capacity region is provided for the MGBC-CM. Furthermore, the Sato-type outer
bound proves to be consistent with the boundary of the secret dirty-paper
coding achievable rate region, and hence, the secrecy capacity region of the
MGBC-CM is established. Finally, a numerical example demonstrates that both
users can achieve positive rates simultaneously under the information-theoretic
secrecy requirement.
\end{abstract}


\section{Introduction}

The demand for efficient, reliable, and secret data communication over wireless
networks has become increasingly critical in recent years. Due to its broadcast
nature, wireless communication is particularly susceptible to eavesdropping.
The inherent nature of wireless networks exposes not only vulnerabilities that
a malicious user can exploit to severely compromise the network, but also
multiplies information confidentiality concerns with respect to in-network
terminals. Hence, security and privacy systems have become critical for
wireless providers and enterprise networks.

In this work, we consider multiple antenna secret broadcast in wireless
networks. This research is inspired by the seminal paper \cite{Wyner:BSTJ:75},
in which Wyner introduced the so-called {\it wiretap channel} and proposed an
information theoretic approach to secret communication schemes. Under the
assumption that the channel to the eavesdropper is a degraded version of that
to the desired receiver, Wyner characterized the capacity-secrecy tradeoff for
the discrete memoryless wiretap channel and showed that secret communication is
possible without sharing a secret key. Later, the result was extended by
Csisz{\'{a}r and K{\"{o}rner who determined the secrecy capacity for the
non-degraded {\it broadcast channel} (BC) with a single confidential message
intended for one of the users \cite{Csiszar:IT:78}.
\begin{figure}[t]
 \centerline{\includegraphics[width=0.95\linewidth,draft=false]{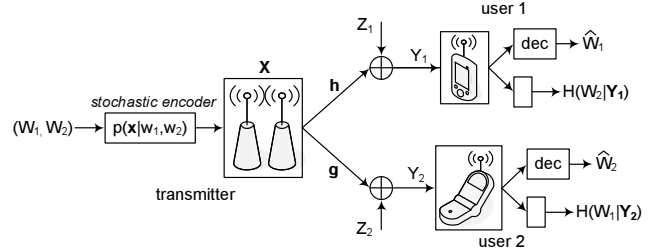}}
\caption{ Channel model of multiple-antenna Gaussian broadcast channel with
confidential messages } \label{fig:gbc}
\end{figure}

In more general wireless network scenarios, secret communication may involve
multiple users and multiple antennas. Consequently, a significant recent
research effort has been invested in the study of the information-theoretic
limits of secret communication in different wireless network environments
including multi-user communication with confidential messages
\cite{Csiszar:IT:04,Tekin:ITA:07,Liang06it,Liu07it,Lai:IT:06,Elza:CISS:07},
secret wireless communication on fading channels
\cite{Barros:ISIT:06,Liang06novit,Gopala:ISIT:07}, and the Gaussian
multiple-input single-output (MISO) and multiple-input multiple-output (MIMO)
wiretap channels
\cite{Li:CISS:07,Liu:WITS:07,Liu:IT:2007oct,Wornell-IT-2007,shafiee-IT-2007}.

These issues and results motivate us to study the multi-antenna Gaussian BC
with confidential messages (MGBC-CM), in which independent confidential
messages from a multi-antenna transmitter are to be communicated to two users.
The corresponding broadcast communication model is shown in Fig.~\ref{fig:gbc}.
Each user would like to obtain its own message reliably and confidentially.

To give insight into this problem, we first consider a single-antenna Gaussian
BC. Note that this channel is degraded~\cite{Cover}, which means that if a
message can be successfully decoded by the inferior user, then the superior
user is also ensured of decoding it. Hence, the secrecy rate of the inferior
user is zero and this problem is reduced to the scalar Gaussian wiretap channel
problem~\cite{Cheong:IT:78} whose secrecy capacity is now the maximum rate
achievable by the superior user. This analysis gives rise to the question: can
the transmitter, in fact, communicate with both users confidentially at nonzero
rate under some other conditions? Roughly speaking, the answer is in the
affirmative. In particular, the transmitter can communicate when equipped with
sufficiently separated multiple antennas.

We here have two goals motivated directly by questions arising in practice. The
first is to determine conditions under which both users can obtain their own
confidential messages 
in a reliable and safe manner. This is equivalent to evaluating the secrecy
capacity region for the MGBC-CM. The second is to show \emph{how} the
transmitter should broadcast confidentially, which is equivalent to designing
an achievable secret coding scheme. To this end, we first describe a secret
{\it dirty-paper coding} (DPC) scheme and derive the corresponding achievable
rate region based on Gaussian codebooks. The secret DPC is based on {\it
double-binning} \cite{Liu07it} which enables both joint encoding and preserving
confidentiality. Next, a computable Sato-type outer bound on the secrecy
capacity region is developed for the MGBC-CM. Furthermore, the Sato-type outer
bound proves to be consistent with the boundary of the secret dirty-paper
coding achievable rate region, and hence, the secrecy capacity region of the
MGBC-CM is established. Finally, a numerical example demonstrates that both
users can achieve positive rates simultaneously under the information-theoretic
secrecy requirement.


\section{System Model and Definitions} \label{sec:model}

\subsection{Channel Model}

We consider the communication of confidential messages to two users over a
Gaussian BC via $t\ge 2$ transmit-antennas. Each user is equipped with a single
receive-antenna. As shown in Fig.~\ref{fig:gbc}, the transmitter sends
independent confidential messages $W_1$ and $W_2$ in $n$ channel uses with
$nR_1$ and $nR_2$ bits, respectively. The message $W_1$ is destined for user 1
and eavesdropped upon by user 2, whereas the message $W_2$ is destined for user
2 and eavesdropped upon by user 1. This communication scenario is referred to
as the {\it multi-antenna Gaussian BC with confidential messages}. The Gaussian
BC is an additive noise channel and the received symbols at user 1 and user 2
can be represented as follows:
\begin{align}
y_{1,i}&=\hv^{H} \xv_i+z_{1,i} \notag\\
y_{2,i}&=\gv^{H} \xv_i+z_{2,i}, \qquad  i=1,\dots,n \label{eq:miso}
\end{align}
where $\xv_i \in \mathbb{C}^{t}$ is a complex input vector at time $i$,
$\{z_{1,i}\}$ and $\{z_{2,i}\}$ correspond to two independent, zero-mean,
unit-variance, complex Gaussian noise sequences, and $\hv, \gv\in
\mathbb{C}^{t}$ are fixed, complex channel attenuation vectors imposed on
user~1 and user~2, respectively. The channel input is constrained by
$\tr(K_{\Xv}) \le P$, where $K_{\Xv}$ is the channel input covariance matrix
and $P$ is the average total power limitation at the transmitter. We also
assume that both the transmitter and users are aware of the attenuation
vectors.

\subsection{Important Channel Parameters for the MGBC-CM}

For the MGBC-CM, we are interested in the following important parameters, which
are related to the generalized eigenvalue problem. Let $\lambda_1$ and $\ev_1$
denote the largest generalized eigenvalue and the corresponding normalized
eigenvector of the pencil $(I+P\hv \hv^{H}, I+P\gv \gv^{H})$ so that
$\ev_1^{H}\ev_1=1$ and
\begin{align}
(I+P\hv \hv^{H})\ev_1= \lambda_1(I+P\gv \gv^{H})\ev_1. \label{eq:eig-def1}
\end{align}
Similarly, we define $\lambda_2$ and $\ev_2$ as the largest generalized
eigenvalue and the corresponding normalized eigenvector of the pencil $(I+P\gv
\gv^{H}, I+P\hv \hv^{H})$ so that $\ev_2^{H}\ev_2=1$ and
\begin{align}
(I+P\gv \gv^{H})\ev_2= \lambda_2(I+P\hv \hv^{H})\ev_2. \label{eq:eig-def2}
\end{align}
A useful property of $\lambda_1$ and $\lambda_2$ is described as follows.
\begin{lemma} \label{lem:geig}
For any channel attenuation vector pair $\hv$ and $\gv$, the largest
generalized eigenvalues of the pencil $(I+P\hv \hv^{H}, I+P\gv \gv^{H})$ and
the pencil $(I+P\gv \gv^{H}, I+P\hv \hv^{H})$ satisfy $\lambda_1 \ge 1$ and
$\lambda_2 \ge 1.$ Moreover, if $\hv$ and $\gv$ are linearly independent, then
both $ \lambda_1$ and $\lambda_2$ are strictly greater than $1$.
\end{lemma}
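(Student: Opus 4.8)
The plan is to reduce the generalized eigenvalue statement to a statement about the Rayleigh-type quotient $\frac{\vv^H(I+P\hv\hv^H)\vv}{\vv^H(I+P\gv\gv^H)\vv}$. Since $I+P\gv\gv^H$ is Hermitian positive definite, the pencil $(I+P\hv\hv^H, I+P\gv\gv^H)$ has only real positive generalized eigenvalues, and by the variational (Courant--Fischer) characterization, $\lambda_1 = \max_{\vv\ne 0} \frac{\vv^H(I+P\hv\hv^H)\vv}{\vv^H(I+P\gv\gv^H)\vv}$. Thus it suffices to exhibit a single vector $\vv$ for which this ratio is at least $1$, and to show it can be taken strictly greater than $1$ when $\hv$ and $\gv$ are linearly independent. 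The symmetric claim for $\lambda_2$ follows by interchanging the roles of $\hv$ and $\gv$.

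First I would establish the non-strict inequality. Choose $\vv$ to be any unit vector orthogonal to $\gv$; such a vector exists because $t\ge 2$. Then $\gv^H\vv = 0$, so the denominator equals $\vv^H\vv = 1$, while the numerator equals $1 + P|\hv^H\vv|^2 \ge 1$. Hence the quotient is at least $1$, giving $\lambda_1\ge 1$. (If $\hv = 0$ the quotient is exactly $1$ for every $\vv$, consistent with $\lambda_1 = 1$.)

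Next I would handle the strict inequality under linear independence of $\hv$ and $\gv$. In that case both vectors are nonzero and neither is a scalar multiple of the other, so the orthogonal complement of $\gv$ (which has dimension $t-1 \ge 1$) is not contained in the orthogonal complement of $\hv$; equivalently, there is a unit vector $\vv$ with $\gv^H\vv = 0$ but $\hv^H\vv \ne 0$. Indeed, if every vector orthogonal to $\gv$ were also orthogonal to $\hv$, then $\hv$ would lie in the one-dimensional span of $\gv$, contradicting linear independence. For this $\vv$ the quotient equals $1 + P|\hv^H\vv|^2 > 1$, so $\lambda_1 > 1$, and by symmetry $\lambda_2 > 1$.

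The argument is essentially elementary once the variational characterization of the largest generalized eigenvalue of a Hermitian definite pencil is invoked; the only point requiring a little care is the existence of the separating vector $\vv$, i.e., that linear independence of $\hv,\gv$ forces the orthogonal complement of $\gv$ to contain a vector not orthogonal to $\hv$ — but this is immediate from a dimension count as above. I do not anticipate a genuine obstacle; the main thing to get right is simply to state the Courant--Fischer step cleanly for the Hermitian-definite pencil rather than re-deriving it.
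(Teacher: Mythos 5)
Your proof is correct and follows essentially the same route as the paper's (deferred) argument: lower-bound the largest generalized eigenvalue via the Rayleigh quotient of the Hermitian-definite pencil evaluated at a unit vector orthogonal to $\gv$ (which exists since $t\ge 2$), and observe that linear independence of $\hv$ and $\gv$ guarantees such a vector with $\hv^{H}\vv\neq 0$, making the quotient strictly exceed $1$. No gaps.
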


\subsection{Definitions}

We now define the secret codebook, the probability of error, the secrecy level,
and the secrecy capacity region for the MGBC-CM as follows.

An $(2^{nR_1},2^{nR_2},n)$ {\it secret codebook} for the MGBC-CM consists of
the following:
\begin{enumerate}
  \item Two message sets: $\Wc_k=\{1,\ldots,2^{nR_k}\}$, for $k=1,2.$

  \item A stochastic encoding function specified by a
  conditional probability density $p(\xv^{n}|w_1,w_2)$,
  where $\xv^{n}=[\xv_1,\ldots,\xv_n]\in \mathbb{C}^{t\times n}$, $w_k\in\Wc_k$ for $k=1,2$, and
\begin{align*}
\int_{\xv^{n}}p(\xv^{n}|w_1,w_2)=1.
\end{align*}

  \item Decoding functions $\phi_1$ and $\phi_2$. The decoding function
  at user $k$ is a deterministic mapping $\phi_k  : \Yc_k^{n}\rightarrow
  \Wc_k.$

\end{enumerate}

At the receiver ends, the error performance and the secrecy level are evaluated
by the following performance measures.
\begin{enumerate}
  \item The reliability is measured by the maximum error probability
  $$P_e^{(n)}\triangleq \max \bigl\{P_{e,1}^{(n)}, P_{e,2}^{(n)}\bigr\}$$
  where $P_{e,k}^{(n)}$ is the error probability for user $k$.

  \item The secrecy levels with respect to confidential messages $W_1$ and $W_2$
are measured, respectively, at user~2 and user 1 with respect to the {\it
equivocation rates} $\frac{1}{n}H(W_2|Y_1^{n})$ and
$\frac{1}{n}H(W_1|Y_2^{n})$.
\end{enumerate}

A rate pair $(R_1,  R_2)$ is said to be achievable for the MGBC-CM if, for any
$\epsilon>0$, there exists an $(2^{nR_1}, 2^{nR_2}, n)$ code that satisfies
$P_e^{(n)}\le \epsilon$, and the information-theoretic secrecy
requirement
\begin{align}
nR_1-H(W_1|Y_2^{n})&\le n\epsilon ~\text{and} ~ nR_2-H(W_2|Y_1^{n})\le
n\epsilon. \label{eq:equiv}
\end{align}

The {\it secrecy capacity region} $\Cc^{\rm MG}_{s}$ of the MGBC-CM is the
closure of the set of all achievable rate pairs $(R_1, R_2)$.

\section{Main Result} \label{sec:main}

The two-user Gaussian BC with multiple transmit-antennas is non-degraded. For
this channel, we have the following closed-from result on the secrecy capacity
region under the information-theoretic secrecy requirement.

\begin{theorem} \label{thm:GBC}
Consider an MGBC-CM modeled as in (\ref{eq:miso}). Let
\begin{align*}
\gamma_1(\alpha)=\frac{1+\alpha P|\hv^{H}\ev_1|^2}{1+ \alpha P
|\gv^{H}\ev_1|^2}, 
\end{align*}
$\gamma_2(\alpha)$ be the largest generalized eigenvalue of the pencil
\begin{align}
\left(I+\frac{(1-\alpha) P \gv\gv^{H}}{1+\alpha P |\gv^{H}\ev_1|^2},\,
I+\frac{(1-\alpha) P\hv \hv^{H}}{1+ \alpha P |\hv^{H}\ev_1|^2}\right),
\label{eq:pencil2}
\end{align}
and ${\Rc}^{\rm MG}(\alpha)$ denote the union of all $(R_1,R_2)$ satisfying
\begin{align*}
&          &  0&\le R_1\le \log_2 \gamma_1(\alpha)  &\notag\\
&\text{and} & 0&\le R_2\le \log_2 \gamma_2(\alpha).  & 
\end{align*}
Then the secrecy capacity region of the MGBC-CM is
\begin{align*}
\Cc^{\rm MG}_{s}= {\rm co} \left\{\bigcup_{0\le \alpha \le 1} \Rc^{\rm
MG}(\alpha)\right\} 
\end{align*}
where ${\rm co}\{\Sc\}$ denotes the convex hull of the set $\Sc$.
\end{theorem}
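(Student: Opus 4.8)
\emph{Proof plan.} I would sandwich $\Cc^{\rm MG}_{s}$ between a secret dirty-paper inner bound and a Sato-type outer bound, and then show that the two regions coincide. \emph{Inner bound.} First I would record the single-letter region achievable by the double-binning scheme of \cite{Liu07it} for a general two-user broadcast channel with confidential messages: for auxiliary variables $(U_1,U_2)$ jointly distributed with the input $X$, every rate pair with $R_1\le I(U_1;Y_1)-I(U_1;Y_2)$ and $R_2\le I(U_2;Y_2)-I(U_2;Y_1)$ obeying the Marton-type sum constraint linking $U_1$ and $U_2$ is achievable with asymptotically perfect equivocation. Then I would evaluate this region for a Costa-type Gaussian choice: split the power budget as $\alpha P$ and $(1-\alpha)P$, write $X=X_1+X_2$ with independent $X_1\sim\Nc(0,\alpha P\,\ev_1\ev_1^{H})$ carrying $W_1$ and $X_2$ carrying $W_2$, and encode $W_2$ first so that $X_1$ is dirty-paper coded against the interference $\gv^{H}X_2$ known at the transmitter (equivalently, take $U_1$ to be the standard Costa combination and $U_2=\gv^{H}X_2$). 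A direct computation of the four mutual informations collapses the $W_1$ constraint to $R_1\le\log_2\gamma_1(\alpha)$, while the $W_2$ constraint becomes the secrecy rate of an effective wiretap channel in which the legitimate user~2 sees residual noise $1+\alpha P|\gv^{H}\ev_1|^2$ and the eavesdropping user~1 sees residual noise $1+\alpha P|\hv^{H}\ev_1|^2$; maximizing that secrecy rate over the covariance of $X_2$ subject to $\tr(\cdot)\le(1-\alpha)P$ is exactly the largest-generalized-eigenvalue problem for the pencil (\ref{eq:pencil2}) \cite{Liu:IT:2007oct,Wornell-IT-2007}, i.e.\ $R_2\le\log_2\gamma_2(\alpha)$. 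Taking the union over $\alpha\in[0,1]$ and the convex hull (time-sharing between operating points) gives ${\rm co}\{\bigcup_{0\le\alpha\le1}\Rc^{\rm MG}(\alpha)\}\subseteq\Cc^{\rm MG}_{s}$, with Lemma~\ref{lem:geig} and (\ref{eq:eig-def1})--(\ref{eq:eig-def2}) ensuring $\gamma_1(\alpha),\gamma_2(\alpha)\ge1$ so the rates are nonnegative.

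\emph{Outer bound.} The structural fact I would exploit is that both the reliability requirement and the equivocations in (\ref{eq:equiv}) depend on the channel only through the marginal conditionals $p(y_1|\xv)$ and $p(y_2|\xv)$, never through the joint law of the noise pair $(z_1,z_2)$. Hence for any admissible noise correlation the region $\Cc^{\rm MG}_{s}$ is contained in the secrecy capacity region of the corresponding correlated-noise broadcast channel, and I would pick that correlation so that the auxiliary channel becomes \emph{degraded} in one ordering or the other. For a degraded broadcast channel with two confidential messages the secrecy capacity region admits a clean single-letter superposition form with no Marton binning, of Csisz\'ar--K\"orner type (cf.\ \cite{Csiszar:IT:78,Liu07it}): $R_1\le I(X;Y_1|U)-I(X;Y_2|U)$ and $R_2\le I(U;Y_2)-I(U;Y_1)$. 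Evaluating this for Gaussian inputs and optimizing over the input covariance $K_{\Xv}$ with $\tr(K_{\Xv})\le P$, over the worst-case noise correlation, and over the split of $K_{\Xv}$ between the ``cloud-centre'' part carrying $W_2$ and the ``satellite'' part carrying $W_1$ yields a computable Sato-type outer bound.

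\emph{Matching the bounds.} The main obstacle is showing that the outer bound is no larger than the inner region, i.e.\ that its boundary is precisely $\{(\log_2\gamma_1(\alpha),\log_2\gamma_2(\alpha)):\alpha\in[0,1]\}$ together with its convex hull. I would attack this with a Lagrangian/KKT analysis of the outer-bound optimization: show that on the boundary the $W_1$-part of $K_{\Xv}$ may be taken rank one along $\ev_1$ (here the generalized-eigenvector definitions and Lemma~\ref{lem:geig} enter), show that the worst-case noise correlation selected by the converse is exactly the one induced by beamforming along $\ev_1$ with power $\alpha P$, and then use the Schur-complement identity relating the pencil $(I+P\hv\hv^{H},\,I+P\gv\gv^{H})$ to the reduced pencil (\ref{eq:pencil2}) to identify the resulting $(R_1,R_2)$ trade-off with $(\log_2\gamma_1(\alpha),\log_2\gamma_2(\alpha))$. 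The delicate point is verifying that one and the same parameter $\alpha$ indexes the extreme points of both regions, which amounts to recognising the DPC power split and the converse's covariance-split/worst-case-noise choice as optimizers of mutually dual problems; once that correspondence is in place the two convex hulls agree and $\Cc^{\rm MG}_{s}={\rm co}\{\bigcup_{0\le\alpha\le1}\Rc^{\rm MG}(\alpha)\}$.
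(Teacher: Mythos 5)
Your inner bound is essentially the paper's: the double-binning region of Lemma~\ref{lem:InBC} evaluated with Gaussian codebooks, $K_{\Uv_1}=\alpha P\ev_1\ev_1^{H}$ carrying $W_1$ dirty-paper coded against the $W_2$ signal, and $K_{\Uv_2}$ rank one along the top generalized eigenvector of the pencil (\ref{eq:pencil2}), which is exactly the choice (\ref{eq:ku12}) yielding $(\log_2\gamma_1(\alpha),\log_2\gamma_2(\alpha))$ via (\ref{eq:sdpc-r2}). One small caution: the subtracted terms in Lemma~\ref{lem:InBC} are $I(\Vv_1;Y_2,\Vv_2)$ and $I(\Vv_2;Y_1,\Vv_1)$, not $I(\Vv_1;Y_2)$ and $I(\Vv_2;Y_1)$; your mutual-information computation has to absorb that extra conditioning (this is precisely what the double-binning/DPC structure is engineered to do), otherwise the claimed collapse to $\gamma_1,\gamma_2$ does not follow.

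The genuine gap is in your converse. You propose to pick the noise correlation $\rho$ so that the auxiliary broadcast channel is \emph{degraded} and then to apply a Csisz\'ar--K\"orner-type superposition region carrying both confidential-rate constraints. This fails for two reasons. First, when $\hv$ and $\gv$ are linearly independent no choice of $\rho$ makes the MISO channel degraded in either order (beamforming orthogonal to $\gv$ gives user~1 a strictly better point-to-point channel and vice versa, which is incompatible with degradedness in either direction). Second, and decisively, if the channel \emph{were} degraded, say $\Xv\to\Yt_1\to\Yt_2$, then $I(U;\Yt_2)-I(U;\Yt_1)\le 0$ by data processing, so the region you quote forces $R_2=0$ (and the opposite ordering forces $R_1=0$); this is the same observation the paper makes for the single-antenna case, and it means a degraded-channel converse can never certify a region whose boundary has both coordinates positive. (There is also no established single-letter secrecy capacity region of that superposition form for a BC with \emph{two} confidential messages.) The paper's Sato-type converse (Theorem~\ref{thm:out1}, Lemma~\ref{lem:outG}) is structurally different: it bounds each rate \emph{separately} by a genie argument --- give the legitimate user both outputs $(\Yt_1,\Yt_2)$, note that the eavesdropper's single output is trivially a degraded version of the pair, and apply Wyner's wiretap converse to get $R_1\le I(\Xv;\Yt_1|\Yt_2)$ and $R_2\le I(\Xv;\Yt_2|\Yt_1)$ --- then intersects over all joint noise laws with the prescribed marginals and evaluates at the single correlation $\rho=(\gv^{H}\ev_1)/(\hv^{H}\ev_1)$, at which the bound meets the S-DPC boundary for every $\alpha$. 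Replacing your degraded-superposition step with this per-rate genie bound is what is needed to close the argument; the Lagrangian/KKT matching you sketch is then carried out against $f_1,f_2$ of Lemma~\ref{lem:outG} rather than against a nonexistent superposition region.
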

\begin{proof}
The achievability part of Theorem~\ref{thm:GBC} is based on secret dirty-paper
coding inner bound in Sec.~\ref{sec:in}. The converse part is based on
Sato-type outer bound in Sec.~\ref{sec:out}. We provide the complete proof in
\cite{Liu:IT:2007oct}.
\end{proof}

\begin{corollary} \label{cor:maxr1}
For the MGBC-CM, the maximum secrecy rate of user 1 is given by
\begin{align*}
R_{1,\max} =\max_{0\le\alpha\le1} \log_2\gamma_1(\alpha) = \log_2\lambda_1
\end{align*}
where $\lambda_1$ is defined in (\ref{eq:eig-def1}).
\end{corollary}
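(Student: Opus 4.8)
The plan is to derive Corollary~\ref{cor:maxr1} directly from Theorem~\ref{thm:GBC} together with two elementary facts about the function $\gamma_1(\alpha)$: that $\gamma_1(1)=\lambda_1$, and that $\gamma_1$ is monotone on $[0,1]$. First I would argue that the maximum secrecy rate of user~1, i.e.\ the largest value of $R_1$ attained over $\Cc^{\rm MG}_s$, equals $\max_{0\le\alpha\le1}\log_2\gamma_1(\alpha)$. For the ``$\le$'' direction, note every point of ${\rm co}\{\bigcup_{\alpha}\Rc^{\rm MG}(\alpha)\}$ is a convex combination of points of $\bigcup_{\alpha}\Rc^{\rm MG}(\alpha)$, so its first coordinate is bounded by $\sup_{\alpha}\log_2\gamma_1(\alpha)$; for the ``$\ge$'' direction, $(\log_2\gamma_1(\alpha),0)\in\Rc^{\rm MG}(\alpha)\subseteq\Cc^{\rm MG}_s$ for every $\alpha$. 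Hence taking the convex hull does not enlarge the projection onto the $R_1$-axis, and $R_{1,\max}=\max_{0\le\alpha\le1}\log_2\gamma_1(\alpha)$.

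Next I would evaluate $\gamma_1$ at $\alpha=1$. Left-multiplying the defining relation $(I+P\hv\hv^{H})\ev_1=\lambda_1(I+P\gv\gv^{H})\ev_1$ from (\ref{eq:eig-def1}) by $\ev_1^{H}$ and using $\ev_1^{H}\ev_1=1$ yields $1+P|\hv^{H}\ev_1|^2=\lambda_1\bigl(1+P|\gv^{H}\ev_1|^2\bigr)$, so that $\gamma_1(1)=\lambda_1$. Then, writing $a=|\hv^{H}\ev_1|^2$ and $b=|\gv^{H}\ev_1|^2$, the map $\alpha\mapsto\gamma_1(\alpha)=\frac{1+\alpha Pa}{1+\alpha Pb}$ is linear-fractional with derivative $\frac{P(a-b)}{(1+\alpha Pb)^2}$, which has constant sign on $[0,1]$; hence $\gamma_1$ is monotone there. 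Since $\gamma_1(0)=1$ while $\gamma_1(1)=\lambda_1\ge1$ by Lemma~\ref{lem:geig}, the function $\gamma_1$ must be nondecreasing, so $\max_{0\le\alpha\le1}\gamma_1(\alpha)=\gamma_1(1)=\lambda_1$. Combining with the previous paragraph gives $R_{1,\max}=\log_2\lambda_1$, as claimed.

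The main point to watch is not any single hard computation but rather the convex-hull reduction in the first step — one must confirm that ${\rm co}\{\cdot\}$ does not create a point with larger $R_1$ than any point of the underlying union — and the sign bookkeeping that links $\lambda_1\ge1$ from Lemma~\ref{lem:geig} to the monotonicity of $\gamma_1$ (so that the endpoint $\alpha=1$, rather than $\alpha=0$, is the maximizer). Everything else is routine.
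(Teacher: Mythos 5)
Your proof is correct and follows the route the paper intends: the corollary is stated as an immediate consequence of Theorem~\ref{thm:GBC}, obtained by noting that the convex hull cannot enlarge the $R_1$-projection, that the Rayleigh-quotient identity $1+P|\hv^{H}\ev_1|^2=\lambda_1(1+P|\gv^{H}\ev_1|^2)$ gives $\gamma_1(1)=\lambda_1$, and that the linear-fractional map $\gamma_1$ is monotone so the maximum sits at $\alpha=1$. The only detail worth making explicit is that $(\log_2\gamma_1(\alpha),0)$ lies in $\Rc^{\rm MG}(\alpha)$ because $\gamma_2(\alpha)\ge 1$ (by the same argument as Lemma~\ref{lem:geig}), but this is minor and your argument is complete.
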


{\it Example:} (MISO wiretap channels) A special case of the MGBC-CM model is
the Gaussian MISO wiretap channel studied in
\cite{Li:CISS:07,Wornell:ISIT:07,Ulukus:ISIT:07}, where the transmitter sends
confidential information to only one user and treats another user as an
eavesdropper. Let us consider a Gaussian MISO wiretap channel modeled in
(\ref{eq:miso}), where user 1 is the legitimate receiver and user 2 is the
eavesdropper. Corollary~\ref{cor:maxr1} implies that the secrecy capacity of
the Gaussian MISO wiretap channel corresponds to the corner point of $\Cc^{\rm
MG}_{s}$. Hence, the secrecy capacity of the Gaussian MISO wiretap channel is
given by
\begin{align*}
C^{\rm MISO}_{s}=\log_2 \lambda_1,
\end{align*}
which coincides with the result of \cite{Wornell:ISIT:07}.

For the MGBC-CM, the actions of user 1 and user 2 are symmetric to each other,
i.e., each user decodes its own message and eavesdrops upon the confidential
information belonging to the other user. Based on symmetry of this two-user BC
model, we can express the secrecy capacity region $\Cc^{\rm MG}_{s}$ in an
alternative way.
\begin{corollary} \label{cor:GBC-2}
For an MGBC-CM modeled in as (\ref{eq:miso}), the secrecy capacity region can
be written as
\begin{align*}
\Cc^{\rm MG}_{s}= {\rm co} \left\{\bigcup_{0\le \beta \le 1} \Rc^{\rm
MG-2}(\beta)\right\} 
\end{align*}
where ${\Rc}^{\rm MG-2}(\beta)$ denotes the union of all $(R_1,R_2)$ satisfying
\begin{align*}
&          &  0&\le R_1\le \log_2 \xi_1(\beta) &\notag\\
&\text{and} & 0&\le R_2\le \log_2 \xi_2(\beta), & 
\end{align*}
$\xi_1(\beta)$ is the largest generalized eigenvalue of the pencil
\begin{align*}
\left(I+\frac{(1-\beta) P \hv\hv^{H}}{1+\beta P |\hv^{H}\ev_2|^2},\,
I+\frac{(1-\beta) P \gv \gv^{H}}{1+ \beta P |\gv^{H}\ev_2|^2}\right)
\end{align*}
and
\begin{align*}
\xi_2(\beta)=\frac{1+\beta P|\gv^{H}\ev_2|^2}{1+ \beta P |\hv^{H}\ev_2|^2}.
\end{align*}
\end{corollary}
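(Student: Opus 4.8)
The plan is to exploit the structural symmetry of the MGBC-CM between its two users: interchanging the labels of user~1 and user~2 produces another MGBC-CM, obtained from the original one simply by swapping the channel vectors $\hv$ and $\gv$. First I would observe that the operational definitions in Section~\ref{sec:model} are symmetric under this relabeling: the reliability measure $P_e^{(n)}$ is the \emph{maximum} of the two users' error probabilities, and the secrecy requirement~(\ref{eq:equiv}) imposes one equivocation constraint per confidential message; both are invariant, up to exchanging the indices $1\leftrightarrow 2$, when the two users trade places, and the total power constraint $\tr(K_{\Xv})\le P$ is unaffected. Consequently, if $\Cc'$ denotes the secrecy capacity region of the MGBC-CM with $\hv$ and $\gv$ interchanged, then $(R_1,R_2)\in\Cc^{\rm MG}_s$ if and only if $(R_2,R_1)\in\Cc'$; that is, $\Cc'$ is the reflection of $\Cc^{\rm MG}_s$ across the line $R_1=R_2$.

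Next I would apply Theorem~\ref{thm:GBC} directly to this swapped channel. Under the substitution $\hv\mapsto\gv$, $\gv\mapsto\hv$, the pencil $(I+P\hv\hv^{H},I+P\gv\gv^{H})$ becomes $(I+P\gv\gv^{H},I+P\hv\hv^{H})$, so by (\ref{eq:eig-def1}) and (\ref{eq:eig-def2}) the eigenvector $\ev_1$ appearing in Theorem~\ref{thm:GBC} is replaced by $\ev_2$. Tracking the substitution through the definitions of $\gamma_1(\cdot)$ and $\gamma_2(\cdot)$ then shows that the quantity playing the role of $\gamma_1$ is exactly $\xi_2(\beta)=\bigl(1+\beta P|\gv^{H}\ev_2|^2\bigr)\big/\bigl(1+\beta P|\hv^{H}\ev_2|^2\bigr)$, while the quantity playing the role of $\gamma_2$ is exactly the largest generalized eigenvalue $\xi_1(\beta)$ of the pencil in the statement of Corollary~\ref{cor:GBC-2}. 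Hence Theorem~\ref{thm:GBC} gives $\Cc'={\rm co}\bigl\{\bigcup_{0\le\beta\le1}\{(\rho_1,\rho_2):0\le\rho_1\le\log_2\xi_2(\beta),\ 0\le\rho_2\le\log_2\xi_1(\beta)\}\bigr\}$.

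Finally I would reflect back. Since reflection across $R_1=R_2$ is a linear bijection of $\Rs^2$, it commutes with unions and with the convex-hull operation, so reflecting the above description of $\Cc'$ swaps the two coordinate bounds and yields precisely $\Cc^{\rm MG}_s={\rm co}\{\bigcup_{0\le\beta\le1}\Rc^{\rm MG-2}(\beta)\}$, as claimed. The only step requiring genuine care is the first one: verifying that the codebook, the maximum-error-probability criterion, and the equivocation-based secrecy constraint are truly symmetric in the two users, so that relabeling really does produce a legitimate MGBC-CM to which Theorem~\ref{thm:GBC} applies. Once that is in hand, the remainder is bookkeeping of the substitution $\hv\leftrightarrow\gv$ in Theorem~\ref{thm:GBC}, and no separate achievability or converse argument is needed.
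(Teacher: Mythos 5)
Your proposal is correct and follows essentially the same route as the paper, which justifies Corollary~\ref{cor:GBC-2} precisely by the symmetry of the two-user model (relabel the users, i.e., swap $\hv\leftrightarrow\gv$ so that $\ev_1\mapsto\ev_2$, apply Theorem~\ref{thm:GBC} to the relabeled channel, and reflect across $R_1=R_2$). Your bookkeeping of the substitution, identifying $\gamma_1\mapsto\xi_2$ and $\gamma_2\mapsto\xi_1$, matches the intended argument.
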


\begin{remark}
Theorem~\ref{thm:GBC} and Corollary~\ref{cor:GBC-2} imply that if $\alpha$ and
$\beta$ satisfy the implicit function $\gamma_1(\alpha)=\xi_1(\beta)$, then
$${\Rc}^{\rm MG}(\alpha)={\Rc}^{\rm MG-2}(\beta).$$
For example, it is easy to check ${\Rc}^{\rm MG}(1)={\Rc}^{\rm MG-2}(0)$.
\end{remark}

Now, by applying Corollary~\ref{cor:GBC-2} and setting $\beta=1$, we can show
that the rate pair $(0,  \log_2\lambda_2)$ is the corner point corresponding to
the maximum achievable rate of user 2 in the capacity region $\Cc^{\rm
MG}_{s}$.
\begin{corollary} \label{cor:maxr2}
For the MGBC-CM, the maximum secrecy rate of user 2 is given by
$$R_{2,\max}=\log_2\lambda_2$$
where $\lambda_2$ is defined in (\ref{eq:eig-def2}).
\end{corollary}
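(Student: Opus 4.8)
The plan is to read the answer off the alternative region description in Corollary~\ref{cor:GBC-2} and to show that the relevant boundary function $\xi_2(\beta)$ is monotone in $\beta$. Since $R_{2,\max}$ is the supremum of $R_2$ over all $(R_1,R_2)\in\Cc^{\rm MG}_s$, and since forming a convex hull (and a closure) of a family of sets never enlarges the supremum of a single coordinate — a convex combination of points whose second coordinate is at most $M$ again has second coordinate at most $M$ — the first step is to reduce the claim to showing $\sup_{0\le\beta\le1}\log_2\xi_2(\beta)=\log_2\lambda_2$. This uses the fact that $\Rc^{\rm MG-2}(\beta)$ is the rectangle with $0\le R_1\le\log_2\xi_1(\beta)$ and $0\le R_2\le\log_2\xi_2(\beta)$, whose largest $R_2$-value is exactly $\log_2\xi_2(\beta)$.

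The second step is an elementary eigenvector computation. Left-multiplying (\ref{eq:eig-def2}) by $\ev_2^{H}$ and using $\ev_2^{H}\ev_2=1$ gives
\begin{align*}
1+P|\gv^{H}\ev_2|^2=\lambda_2\bigl(1+P|\hv^{H}\ev_2|^2\bigr).
\end{align*}
Since $\lambda_2\ge1$ by Lemma~\ref{lem:geig}, the right-hand side is at least $1+P|\hv^{H}\ev_2|^2$, hence $|\gv^{H}\ev_2|^2\ge|\hv^{H}\ev_2|^2$. Differentiating $\xi_2(\beta)=\bigl(1+\beta P|\gv^{H}\ev_2|^2\bigr)/\bigl(1+\beta P|\hv^{H}\ev_2|^2\bigr)$ in $\beta$ then shows that $\xi_2'(\beta)$ has the sign of $|\gv^{H}\ev_2|^2-|\hv^{H}\ev_2|^2\ge0$, so $\xi_2$ is nondecreasing on $[0,1]$ and $\sup_{0\le\beta\le1}\xi_2(\beta)=\xi_2(1)$.

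The final step evaluates at $\beta=1$. The displayed identity immediately gives $\xi_2(1)=\lambda_2$. Moreover, at $\beta=1$ the factor $(1-\beta)$ vanishes, so both numerators in the pencil defining $\xi_1(\beta)$ vanish; that pencil degenerates to $(I,I)$, giving $\xi_1(1)=1$ and $\log_2\xi_1(1)=0$. Hence $\Rc^{\rm MG-2}(1)$ is the segment joining $(0,0)$ and $(0,\log_2\lambda_2)$, which lies in $\Cc^{\rm MG}_s$, so the supremum is actually attained. Chaining the three steps yields $R_{2,\max}=\log_2\xi_2(1)=\log_2\lambda_2$. I do not expect a genuine obstacle here; the only points needing a line of justification are that the convex-hull/closure operation in the capacity formula does not raise the maximal $R_2$, and that the sign of $\lambda_2-1$ fixes the monotone direction of $\xi_2$.
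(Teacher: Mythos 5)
Your proof is correct and follows essentially the same route as the paper, which obtains the result by applying Corollary~\ref{cor:GBC-2} at $\beta=1$ and identifying $(0,\log_2\lambda_2)$ as the corner point. Your added details --- that the convex hull cannot raise the supremum of $R_2$, and that $\xi_2(\beta)$ is nondecreasing because $1+P|\gv^{H}\ev_2|^2=\lambda_2\bigl(1+P|\hv^{H}\ev_2|^2\bigr)$ with $\lambda_2\ge 1$ forces $|\gv^{H}\ev_2|^2\ge|\hv^{H}\ev_2|^2$ --- are exactly the justifications the paper leaves implicit.
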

Corollaries~\ref{cor:maxr1} and~\ref{cor:maxr2} imply that for the MGBC-CM,
both users can achieve positive rates with information-theoretic secrecy if and
only if $\lambda_1>1$ and $\lambda_2>1$. Furthermore, Lemma~\ref{lem:geig}
illustrates that this condition can be ensured when the attenuation vectors
$\hv$ and $\gv$ are linearly independent.

\section{Achievability: Secret DPC Scheme} \label{sec:in}

\subsection{Double-Binning Inner bound for the BC-CM}

An achievable rate region for the broadcast channel with confidential messages
(BC-CM) has been established in \cite{Liu07it} based on a double-binning scheme
that enables both joint encoding at the transmitter by using Gel'fand-Pinsker
binning and preserving confidentiality by using random binning.

\begin{lemma} (\cite[Theorem~3]{Liu07it}) \label{lem:InBC}
Let $\Vv_1$ and $\Vv_2$ be auxiliary random variables, $\Omega$ denote the
class of joint probability densities $p(\vv_1,\vv_2,\xv,y_1,y_2)$ that factor
as
\begin{align*}
p(\vv_1,\vv_2)p(\xv|\vv_1,\vv_2)p(y_1,y_2|\xv),
\end{align*}
and ${\Rc}_{\rm I}(\pi)$ denote the union of all $(R_1,R_2)$ satisfying
\begin{align*}
&            & 0 &\le R_1 \le I(\Vv_1;Y_1)-I(\Vv_1;Y_2,\Vv_2) &\\
& \text{and} & 0 &\le R_2 \le I(\Vv_2;Y_2)-I(\Vv_2;Y_1,\Vv_1) 
\end{align*}
for a given joint probability density $\pi\in \Omega$. For the BC-CM, any rate
pair
\begin{align}
(R_1,R_2)\in {\rm co} \left\{\bigcup_{\pi\in\Omega} \Rc_{\rm I}(\pi)\right\}
\label{eq:inner}
\end{align}
is achievable.
\end{lemma}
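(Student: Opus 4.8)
\emph{Proof plan.} The plan is a random-coding argument over a \emph{double-binned} ensemble: for each user one binning layer implements Marton/Gel'fand--Pinsker (dirty-paper) coding, so that a jointly typical pair of auxiliary codewords can be produced in spite of the correlation between $\Vv_1$ and $\Vv_2$, while a second, nested layer of random wiretap binning supplies the local randomness that blinds the eavesdropping user. I would fix a density $\pi\in\Omega$, build the random code, establish reliability and the equivocation requirement (\ref{eq:equiv}) in expectation over the ensemble, extract a good deterministic code by expurgation, and finish by time-sharing over $\pi$ to obtain the convex hull (\ref{eq:inner}).

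\emph{Codebook, encoding, decoding.} For $k=1,2$ generate $2^{n(R_k+S_k+T_k)}$ i.i.d.\ sequences $\vv_k^{n}\sim\prod_i p(v_{k,i})$, indexed as $\vv_k^{n}(w_k,s_k,t_k)$ with $w_k$ the message ($2^{nR_k}$ values), $s_k$ the secrecy-randomization index ($2^{nS_k}$ values), and $t_k$ the Marton/GP index ($2^{nT_k}$ values). To transmit $(w_1,w_2)$: draw $s_1,s_2$ uniformly and independently, search for a pair $(t_1,t_2)$ making $(\vv_1^{n}(w_1,s_1,t_1),\vv_2^{n}(w_2,s_2,t_2))$ jointly typical, pick one such pair uniformly at random, then draw $\xv^{n}\sim\prod_i p(x_i|v_{1,i},v_{2,i})$ and send it. The mutual covering lemma makes such a pair exist with probability $\to1$ once $T_1+T_2>I(\Vv_1;\Vv_2)$. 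User~$k$ joint-typicality-decodes $(w_k,s_k,t_k)$ from $y_k^{n}$, which by the packing lemma succeeds with probability $\to1$ provided $R_k+S_k+T_k<I(\Vv_k;Y_k)$.

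\emph{Equivocation -- the crux.} By symmetry it suffices to lower-bound $\tfrac1nH(W_1|Y_2^{n})$. Conditioning reduces entropy, so $H(W_1|Y_2^{n})\ge H(W_1|Y_2^{n},\Vv_2^{n})$ and I may as well hand user~2 her own auxiliary codeword. Since $W_1$ is a deterministic function of $\Vv_1^{n}$,
\begin{align*}
H(W_1|Y_2^{n},\Vv_2^{n})=H(\Vv_1^{n}|Y_2^{n},\Vv_2^{n})-H(\Vv_1^{n}|W_1,Y_2^{n},\Vv_2^{n}).
\end{align*}
For the subtracted term, the $2^{n(S_1+T_1)}$ codewords of bin $w_1$ are i.i.d.\ $\sim p(v_1)$ and $(Y_2^{n},\Vv_2^{n})$ is jointly typical with the transmitted one, so the packing lemma and Fano's inequality give $H(\Vv_1^{n}|W_1,Y_2^{n},\Vv_2^{n})\le n\epsilon_n$ as long as $S_1+T_1<I(\Vv_1;Y_2,\Vv_2)$, i.e.\ user~2, once told $W_1$, can resolve $\Vv_1^{n}$ inside its bin. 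For the first term, $H(\Vv_1^{n}|Y_2^{n},\Vv_2^{n})=H(\Vv_1^{n}|\Vv_2^{n})-I(\Vv_1^{n};Y_2^{n}|\Vv_2^{n})$; the memoryless channel and the Markov chain $(\Vv_1,\Vv_2)^{n}-X^{n}-Y_2^{n}$ single-letterize the mutual-information term to at most $n(I(\Vv_1;Y_2|\Vv_2)+\epsilon_n)$, while the uniformity of $(w_1,s_1)$ -- only $t_1$ being adapted to $\Vv_2^{n}$ -- gives $H(\Vv_1^{n}|\Vv_2^{n})\ge n(R_1+S_1-T_2-\epsilon_n)$. Hence
\begin{align*}
\tfrac1nH(W_1|Y_2^{n})\ge R_1+S_1-T_2-I(\Vv_1;Y_2|\Vv_2)-\epsilon_n.
\end{align*}
Taking $T_2=0$ and $T_1\to I(\Vv_1;\Vv_2)$ (one-sided dirty-paper binning) and $S_1\to I(\Vv_1;Y_2|\Vv_2)$ makes $S_1+T_1\to I(\Vv_1;Y_2,\Vv_2)$, so the Fano constraint holds in the limit and the right-hand side tends to $R_1-\epsilon_n$; the decoding constraint $R_1+S_1+T_1<I(\Vv_1;Y_1)$ then reads exactly $R_1<I(\Vv_1;Y_1)-I(\Vv_1;Y_2,\Vv_2)$. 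Running the same argument for user~2 -- whose leakage now arises from $\Vv_1^{n}$ being adapted to $\Vv_2^{n}$ and is absorbed into $S_2$ -- yields the matching bound $R_2<I(\Vv_2;Y_2)-I(\Vv_2;Y_1,\Vv_1)$.

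\emph{Main obstacle.} The delicate points are all in the equivocation step. The hardest is the lower bound $H(\Vv_1^{n}|\Vv_2^{n})\ge n(R_1+S_1-T_2-\epsilon_n)$: the joint-typicality encoder must leave $(w_1,s_1)$ essentially uniform given the eavesdropper's codeword even though the Marton index is correlated with it, which dictates the precise allocation of $(S_k,T_k)$ (including zeroing one GP index) and requires a Fourier--Motzkin check that the covering, both packing, Fano and secrecy constraints are jointly feasible and collapse exactly onto $\Rc_{\rm I}(\pi)$. Secondary tasks are the single-letterization of $I(\Vv_1^{n};Y_2^{n}|\Vv_2^{n})$ -- which uses that the empirical joint type of the selected $(\vv_1^{n},\vv_2^{n})$ is close to $p(v_1,v_2)$ -- and passing from ensemble-average bounds on \emph{both} $P_e^{(n)}$ and the equivocation to a single good deterministic code by expurgation. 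A standard time-sharing argument over $\pi\in\Omega$ then delivers ${\rm co}\{\bigcup_{\pi}\Rc_{\rm I}(\pi)\}$, which is (\ref{eq:inner}).
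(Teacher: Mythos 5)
Your proposal is correct in outline and follows essentially the same route as the source of this lemma: the paper itself does not prove it but imports it from \cite[Theorem~3]{Liu07it}, whose proof is exactly this double-binning construction (Marton/Gel'fand--Pinsker covering indices for joint encoding nested with wiretap randomization bins), with the equivocation bounded by handing the eavesdropping user the other auxiliary codeword and resolving the transmitted codeword within its bin via Fano's inequality. The delicate points you flag --- the near-equality tension between the in-bin resolvability constraint $S_2+T_2<I(\Vv_2;Y_1,\Vv_1)$ and the randomization requirement $S_2\ge T_1+I(\Vv_2;Y_1|\Vv_1)$, and the single-letterization over the typicality-selected codeword pair --- are handled in the cited reference by the same $\epsilon$-bookkeeping you describe.
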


\subsection{Secret DPC Scheme for the MGBC-CM}

The achievable strategy in Lemma~\ref{lem:InBC} introduces a double-binning
coding scheme. However, when the rate region (\ref{eq:inner}) is used as a
constructive technique, it not clear how to choose the auxiliary random
variables $\Vv_1$ and $\Vv_2$ to implement the double-binning codebook, and
hence, one has to ``guess'' the density of $p(\vv_1,\vv_2,\xv)$. Here, we
employ the DPC technique with the double-binning code structure to develop the
{\it secret DPC} (S-DPC) achievable rate region.

For the MGBC-CM, we consider a secret dirty-paper encoder with Gaussian
codebooks. Based on Lemma~\ref{lem:InBC}, we obtain a S-DPC rate region for the
MGBC-CM as follows.
\begin{lemma}\label{lem:GBCin} {\rm [S-DPC region]}
Let  ${\Rc}_{\rm I}^{\rm S-DPC}(K_{\Uv_1},K_{\Uv_2})$ denote the union of all
$(R_1,R_2)$ satisfying
\begin{align}
& &0&\le R_1\le \log_2 \frac{1+\hv^{H} K_{\Uv_1} \hv}{1+\gv^{H} K_{\Uv_1}
\gv}& \label{eq:dpc-r1}\\
& \text{and} & 0&\le R_2\le \log_2 \frac{1+\gv^{H} (K_{\Uv_1}+K_{\Uv_2})
\gv}{1+\hv^{H} (K_{\Uv_1}+K_{\Uv_2}) \hv} + &\notag\\
&  & &\qquad ~~\quad \log_2 \frac{1+\hv^{H} K_{\Uv_1}\hv}{1+\gv^{H} K_{\Uv_1}
\gv}. \label{eq:dpc-r2}&
\end{align}
Then, any rate pair
\begin{align*}
(R_1,R_2)\in {\rm co} \left\{\bigcup_{\tr(K_{\Uv_1}+K_{\Uv_2})\le P} \Rc_{\rm
I}^{\rm S-DPC}(K_{\Uv_1},K_{\Uv_2})\right\} 
\end{align*}
is achievable for the MGBC-CM.
\end{lemma}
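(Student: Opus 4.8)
The plan is to derive the S-DPC region as a Gaussian, dirty-paper specialization of the double-binning inner bound in Lemma~\ref{lem:InBC}. Fix positive semidefinite matrices $K_{\Uv_1},K_{\Uv_2}$ with $\tr(K_{\Uv_1}+K_{\Uv_2})\le P$, and let $\Uv_1,\Uv_2$ be independent zero-mean circularly-symmetric complex Gaussian vectors with these covariance matrices. Take the channel input to be $\xv=\Uv_1+\Uv_2$, so that $K_{\Xv}=K_{\Uv_1}+K_{\Uv_2}$ meets the power constraint. The confidential message $W_2$ is carried by the ``clean'' layer $\Uv_2$, and $W_1$ is carried by $\Uv_1$, which is generated by a dirty-paper (Gel'fand--Pinsker/Costa) construction that treats $\Uv_2$ as non-causally known interference \emph{matched to user~1's channel} $\hv$. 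Concretely, set the auxiliary variables $\Vv_2=\Uv_2$ and $\Vv_1=\Uv_1+C\Uv_2$, where $C$ is the complex dirty-paper matrix chosen so that the Gel'fand--Pinsker quantity attains its interference-free value,
\begin{align}
I(\Vv_1;Y_1)-I(\Vv_1;\Uv_2)=\log_2\bigl(1+\hv^{H}K_{\Uv_1}\hv\bigr). \label{eq:dpc-id}
\end{align}
One checks that $(\Vv_1,\Vv_2,\xv,Y_1,Y_2)$ factors as $p(\vv_1,\vv_2)\,p(\xv|\vv_1,\vv_2)\,p(y_1,y_2|\xv)$, so this choice is admissible in Lemma~\ref{lem:InBC}.

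Next I evaluate the four mutual-information terms in Lemma~\ref{lem:InBC} in closed form. For $R_1$: since $\Vv_2=\Uv_2$ and $\Uv_1\perp\Uv_2$, conditioning on $\Uv_2$ both cancels $\gv^{H}\Uv_2$ from $Y_2$ and reduces $\Vv_1$ to $\Uv_1$, giving $I(\Vv_1;Y_2,\Vv_2)=I(\Vv_1;\Uv_2)+\log_2(1+\gv^{H}K_{\Uv_1}\gv)$; subtracting this from $I(\Vv_1;Y_1)$ and invoking (\ref{eq:dpc-id}) cancels the $I(\Vv_1;\Uv_2)$ terms and yields exactly (\ref{eq:dpc-r1}). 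For $R_2$: user~2 sees $\Uv_1$ as an independent Gaussian interferer of covariance $K_{\Uv_1}$, so $I(\Vv_2;Y_2)=\log_2\frac{1+\gv^{H}(K_{\Uv_1}+K_{\Uv_2})\gv}{1+\gv^{H}K_{\Uv_1}\gv}$; for the equivocation term, the change of variables $(Y_1,\Vv_1)\mapsto(Y_1-\hv^{H}\Vv_1,\Vv_1)=(\hv^{H}(I-C)\Uv_2+z_1,\,\Uv_1+C\Uv_2)$ puts $I(\Vv_2;Y_1,\Vv_1)$ into a form in which the Gaussian-entropy algebra telescopes to $\log_2\frac{1+\hv^{H}(K_{\Uv_1}+K_{\Uv_2})\hv}{1+\hv^{H}K_{\Uv_1}\hv}$. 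Taking the difference $I(\Vv_2;Y_2)-I(\Vv_2;Y_1,\Vv_1)$ and rearranging the logarithms reproduces precisely (\ref{eq:dpc-r2}). Finally, since for fixed covariances the achievable set is the rectangle cut out by (\ref{eq:dpc-r1})--(\ref{eq:dpc-r2}), taking the union over all admissible $(K_{\Uv_1},K_{\Uv_2})$ and then the convex hull, as Lemma~\ref{lem:InBC} permits, proves the claim.

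I expect the main obstacle to be the $R_2$ equivocation term $I(\Vv_2;Y_1,\Vv_1)$. Because the dirty-paper auxiliary $\Vv_1$ is correlated with $\Uv_2$, one cannot dismiss the ``genie'' observation $\Vv_1$ as either harmless or fully informative; one has to compute the joint second-order statistics of $(\Uv_2,Y_1,\Vv_1)$ and check that the cross-covariance contributions cancel --- a cancellation that occurs precisely because $C$ is matched to $\hv$, the same reason (\ref{eq:dpc-id}) holds. A secondary point requiring care is verifying that this single $\hv$-matched choice of $C$ simultaneously produces the clean forms of both (\ref{eq:dpc-r1}) and (\ref{eq:dpc-r2}); since Lemma~\ref{lem:InBC} only requires exhibiting \emph{some} admissible joint density, it suffices to display this one family, with no separate optimization over $C$ needed in the $R_2$ expression.
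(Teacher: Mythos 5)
Your proposal is correct and follows essentially the same route as the paper: it specializes the double-binning inner bound of Lemma~\ref{lem:InBC} to Gaussian inputs $\Xv=\Uv_1+\Uv_2$ with $\Vv_2=\Uv_2$ and $\Vv_1=\Uv_1+C\Uv_2$, where $C$ is the MMSE dirty-paper matrix matched to $\hv$. The key cancellation you flag --- that $I(\Vv_2;Y_1,\Vv_1)$ collapses to $I(\Vv_2;Y_1)=\log_2\bigl(\tfrac{1+\hv^{H}(K_{\Uv_1}+K_{\Uv_2})\hv}{1+\hv^{H}K_{\Uv_1}\hv}\bigr)$ because $I(\Uv_2;\Vv_1\mid Y_1)=0$ for this choice of $C$ --- does indeed hold and is precisely the step the paper's detailed derivation turns on.
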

\begin{proof}
A detail proof can be found in \cite{Liu:IT:2007oct}.
\end{proof}

The S-DPC achievable rate region requires optimization of the covariance
matrices $K_{\Uv_1}$ and $K_{\Uv_2}$. In order to achieve the boundary of
$\Cc^{\rm MG}_{s}$, we choose $K_{\Uv_1}$ and $K_{\Uv_2}$ as follows:
\begin{align}
&           & K_{\Uv_1}&= \alpha P \ev_1\ev_1^{H} &\notag\\
&\text{and} & K_{\Uv_2}&= (1-\alpha) P \cv_2(\alpha)\cv_2^{H}(\alpha), \quad
0\le\alpha\le1 & \label{eq:ku12}
\end{align}
where $\ev_1$ is defined in (\ref{eq:eig-def1}) and $\cv_2(\alpha)$ is a
normalized eigenvector of the pencil (\ref{eq:pencil2}) corresponding to
$\gamma_2(\alpha)$. Next, inserting (\ref{eq:ku12}) into (\ref{eq:dpc-r1}) and
(\ref{eq:dpc-r2}), we obtain
\begin{align}
&\frac{1+\hv^{H} K_{\Uv_1} \hv}{1+\gv^{H}
K_{\Uv_1} \gv}=\gamma_1(\alpha) \qquad \qquad \text{and} \notag\\
&\frac{[1+\gv^{H} (K_{\Uv_1}+K_{\Uv_2}) \gv] [1+\hv^{H} K_{\Uv_1}\hv]}
{[1+\hv^{H} (K_{\Uv_1}+K_{\Uv_2}) \hv][1+\gv^{H} K_{\Uv_1} \gv]}
=\gamma_2(\alpha). \label{eq:sdpc-r2}
\end{align}
Now, by substituting (\ref{eq:sdpc-r2}) into Lemma~\ref{lem:GBCin}, we obtain
the desired achievable result.

\section{Converse: Sato-Type Outer Bound} \label{sec:out}

\subsection{Sato-Type Outer Bound}
We consider an important property for the BC-CM in the following lemma.
\begin{lemma} \label{lem:sam}
Let $\Pc$ denote the set of channels $p_{\Yt_1,\Yt_2|\Xv}$ whose marginal
distributions satisfy
\begin{align*}
&            & p_{\Yt_1|\Xv}(y_1|\xv)&=p_{Y_1|\Xv}(y_1|\xv) &\notag \\
&\text{and}  & p_{\Yt_2|\Xv}(y_2|\xv)&=p_{Y_1|\Xv}(y_2|\xv) &
\end{align*}
for all $y_1$, $y_2$ and $\xv$. The secrecy capacity region $\Cc^{\rm MG}_{s}$
is the same for the channels $p_{\Yt_1,\Yt_2|\Xv} \in \Pc$.
\end{lemma}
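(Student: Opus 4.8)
The plan is to show that the equivocation constraints and the probability-of-error constraint that jointly define achievability depend on the channel $p_{\Yt_1,\Yt_2|\Xv}$ only through its two conditional marginals $p_{\Yt_1|\Xv}$ and $p_{\Yt_2|\Xv}$, so replacing the true joint law by any other element of $\Pc$ leaves the achievable region unchanged. First I would observe that for a fixed $(2^{nR_1},2^{nR_2},n)$ secret codebook, user~$k$'s decoder $\phi_k$ acts only on $Y_k^n$, so the error probability $P_{e,k}^{(n)}$ is a functional of $p_{Y_k^n|X^n}$ alone, hence of the marginal $p_{Y_k|\Xv}$ alone (the channel is memoryless). Likewise the equivocation $H(W_1|Y_2^n)$ is determined by the joint law of $(W_1,Y_2^n)$, which factors through $p(w_1,w_2)p(\xv^n|w_1,w_2)p_{Y_2^n|X^n}$ and therefore depends only on the marginal $p_{Y_2|\Xv}$; symmetrically for $H(W_2|Y_1^n)$. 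Consequently a rate pair is achievable over the channel $p_{\Yt_1,\Yt_2|\Xv}\in\Pc$ if and only if it is achievable over the original channel, and the secrecy capacity region — being the closure of the set of achievable pairs — is identical.

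The key steps, in order, are: (i) fix an arbitrary code and write $P_{e,k}^{(n)}$, $\tfrac1n H(W_1|Y_2^n)$, and $\tfrac1n H(W_2|Y_1^n)$ explicitly as integrals against the induced joint distributions; (ii) note that each of these three quantities involves only one of the two received signals at a time, so each is a function of a single conditional marginal of the channel; (iii) invoke the defining property of $\Pc$ that $p_{\Yt_1|\Xv}=p_{Y_1|\Xv}$ and $p_{\Yt_2|\Xv}=p_{Y_1|\Xv}$ — here the relevant point is simply that the marginals are preserved (the statement as written even sets both equal to $p_{Y_1|\Xv}$, but what the argument actually uses is that marginals are held fixed); (iv) conclude that the set of $(R_1,R_2)$ satisfying $P_e^{(n)}\le\epsilon$ together with \eqref{eq:equiv} is the same for every channel in $\Pc$, and pass to the closure to obtain equality of $\Cc^{\rm MG}_s$.

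I expect the main obstacle to be conceptual rather than computational: one must be careful that nothing in the achievability or converse definitions secretly couples $Y_1^n$ and $Y_2^n$ through the channel's joint structure. The stochastic encoder, the two message sets, and the two single-user decoders all touch $Y_1^n$ and $Y_2^n$ only separately, and the secrecy requirement \eqref{eq:equiv} likewise pairs $W_1$ with $Y_2^n$ and $W_2$ with $Y_1^n$ without ever conditioning on both outputs simultaneously — so the joint law $p_{\Yt_1,\Yt_2|\Xv}$ genuinely enters only through its marginals. Making this observation airtight (and noting that it is exactly the feature that makes a Sato-type outer bound possible: one is free to pick the worst-case correlation between $\Yt_1$ and $\Yt_2$ when bounding a single-letter mutual-information expression in the converse) is the substance of the lemma; once it is stated cleanly, the proof is immediate.
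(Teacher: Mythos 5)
Your proposal is correct and follows exactly the argument the paper relies on (the classical Sato observation, extended to note that each equivocation $H(W_1|Y_2^n)$, $H(W_2|Y_1^n)$ also pairs a message with only one output sequence, so reliability and secrecy constraints alike depend only on the conditional marginals $p_{Y_1|\Xv}$ and $p_{Y_2|\Xv}$). You are also right that the second displayed condition in the lemma contains a typo and should read $p_{\Yt_2|\Xv}(y_2|\xv)=p_{Y_2|\Xv}(y_2|\xv)$.
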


We note that $\Pc$ is the set of channels $p_{\Yt_1,\Yt_2|\Xv}$ that have the
same marginal distributions as the original channel transition density
$p_{Y_1,Y_2|\Xv}$. Lemma~\ref{lem:sam} implies that the secrecy capacity region
$\Cc^{\rm MG}_{s}$ depends only on marginal distributions.

\begin{theorem} \label{thm:out1} 
Let $\Rc_{\rm O}\bigl(P_{\Yt_1,\Yt_2|\Xv}, P_{\Xv}\bigr)$ denote the union of
all rate pairs $(R_1, R_2)$ satisfying
\begin{align*}
&     &        R_1&\le I(\Xv;\Yt_1|\Yt_2) & \\
& \text{and} & R_2&\le I(\Xv;\Yt_2|\Yt_1) & 
\end{align*}
for given distributions $P_{\Xv}$ and $P_{\Yt_1,\Yt_2|\Xv}$. The secrecy
capacity region $\Cc^{\rm MG}_{s}$ of the BC-CM satisfies
\begin{align}
\Cc^{\rm MG}_{s} \subseteq \bigcap_{P_{\Yt_1,\Yt_2|\Xv}\in \Pc}
\left\{\bigcup_{P_{\Xv}} \Rc_{\rm O}\bigl(P_{\Yt_1,\Yt_2|\Xv},
P_{\Xv}\bigr)\right\}. \label{eq:Sato}
\end{align}
\end{theorem}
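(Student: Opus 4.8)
The plan is to run any sequence of codes achieving $(R_1,R_2)$ through a \emph{fictitious} broadcast channel $P_{\Yt_1,\Yt_2|\Xv}\in\Pc$, i.e., one whose two marginals coincide with those of the true channel, and then to combine Fano's inequality with the equivocation constraint~(\ref{eq:equiv}). Because the error probabilities and the equivocation rates depend on the channel only through its two marginals, the same code keeps $P_e^{(n)}\le\epsilon$ and~(\ref{eq:equiv}) over every such $P_{\Yt_1,\Yt_2|\Xv}$; this is exactly what Lemma~\ref{lem:sam} provides. I would then establish, for each fixed fictitious channel, a single-letter outer bound of the form $\Rc_{\rm O}$, and finally intersect over $\Pc$.

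Next, fix one $P_{\Yt_1,\Yt_2|\Xv}\in\Pc$, extended memorylessly to $n$ uses, and argue as follows. Using $H(W_1)=nR_1$, Fano's inequality at user~1, and the secrecy constraint (which becomes $I(W_1;\Yt_2^n)\le n\epsilon_n$ with $\epsilon_n\to0$), one obtains $nR_1\le I(W_1;\Yt_1^n)-I(W_1;\Yt_2^n)+n\epsilon_n$. Writing this difference as $I(W_1;\Yt_1^n|\Yt_2^n)-I(W_1;\Yt_2^n|\Yt_1^n)$, discarding the nonnegative second term, and invoking the Markov chain $W_1\to(\Xv^n,\Yt_2^n)\to\Yt_1^n$ (a consequence of $(W_1,W_2)\to\Xv^n\to(\Yt_1^n,\Yt_2^n)$), one gets $I(W_1;\Yt_1^n)-I(W_1;\Yt_2^n)\le I(W_1;\Yt_1^n|\Yt_2^n)\le I(\Xv^n;\Yt_1^n|\Yt_2^n)$. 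Hence $nR_1\le I(\Xv^n;\Yt_1^n|\Yt_2^n)+n\epsilon_n$, and the symmetric argument yields $nR_2\le I(\Xv^n;\Yt_2^n|\Yt_1^n)+n\epsilon_n$.

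It remains to single-letterize. Memorylessness of the fictitious channel and the fact that conditioning reduces entropy give $I(\Xv^n;\Yt_1^n|\Yt_2^n)\le\sum_{i=1}^n I(\Xv_i;\Yt_{1,i}|\Yt_{2,i})$, and similarly for the $R_2$ term. Introducing a time-sharing index $Q$ uniform over $\{1,\dots,n\}$, the right-hand side equals $n\,I(\Xv_Q;\Yt_{1,Q}|\Yt_{2,Q},Q)$; and because the single-letter channel is fixed, $Q\to\Xv_Q\to(\Yt_{1,Q},\Yt_{2,Q})$ is a Markov chain, so $I(\Xv_Q;\Yt_{1,Q}|\Yt_{2,Q},Q)=I(\Xv_Q;\Yt_{1,Q}|\Yt_{2,Q})-I(Q;\Yt_{1,Q}|\Yt_{2,Q})\le I(\Xv;\Yt_1|\Yt_2)$, where $\Xv:=\Xv_Q$ has the averaged input law $\bar{P}_{\Xv}=\frac{1}{n}\sum_i P_{\Xv_i}$ (which obeys $\tr(K_{\Xv})\le P$), and the same $\bar{P}_{\Xv}$ serves the $R_2$ bound. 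Letting $n\to\infty$ --- passing to a weakly convergent subsequence of the distributions $\bar{P}_{\Xv}$ (tight because $\tr(K_{\Xv})\le P$) along which the relevant mutual informations converge --- yields $(R_1,R_2)\in\bigcup_{P_{\Xv}}\Rc_{\rm O}(P_{\Yt_1,\Yt_2|\Xv},P_{\Xv})$; since $P_{\Yt_1,\Yt_2|\Xv}\in\Pc$ was arbitrary, intersecting over $\Pc$ yields~(\ref{eq:Sato}).

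I expect the delicate points to sit in the last paragraph: showing that no auxiliary variable beyond $\Xv$ itself survives --- so that the outer region is precisely a union over \emph{single} input distributions $P_{\Xv}$, which is exactly what the Markov-chain manipulation of the time index $Q$ buys --- and the accompanying $n\to\infty$ argument that extracts one fixed $P_{\Xv}$ from the per-block averages while keeping $\tr(K_{\Xv})\le P$. One must also justify the fictitious-channel substitution itself, i.e., that the \emph{same} physical code retains its user-1 error probability and its user-2 equivocation when the joint law of $(\Yt_1^n,\Yt_2^n)$ is altered but both marginals are held fixed --- precisely the content of Lemma~\ref{lem:sam}.
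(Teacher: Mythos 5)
Your proposal is correct and follows the same route the paper sketches: use Lemma~\ref{lem:sam} to replace the true channel by any $P_{\Yt_1,\Yt_2|\Xv}\in\Pc$ while preserving reliability and equivocation, give user~1 the genie side information $\Yt_2^n$ so that the eavesdropper's observation is degraded relative to $(\Yt_1^n,\Yt_2^n)$, and then run the standard wiretap-channel converse (Fano plus the secrecy constraint, yielding $nR_1\le I(\Xv^n;\Yt_1^n|\Yt_2^n)+n\epsilon_n$) before single-letterizing and intersecting over $\Pc$. Your explicit Fano/equivocation chain is just the worked-out form of the ``wiretap channel result'' the paper invokes, so the two arguments coincide in substance.
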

\begin{proof}
A detail proof can be found in \cite{Liu:IT:2007oct}.
\end{proof}
\begin{remark}
The outer bound (\ref{eq:Sato}) follows by evaluating the secrecy level at each
receiver end in an individual manner, while letting the users decode their
messages in a \emph{cooperative} manner. In this sense, we refer to this bound
as ``Sato-type'' outer bound.
\end{remark}
For example, we consider the confidential message $W_1$ that is destined for
user 1 (corresponding to $\Yt_1$) and eavesdropped upon by user 2
(corresponding to $\Yt_2$). We assume that a genie gives user 1 the signal
$\Yt_2$ as side information for decoding $W_1$. Note that the eavesdropped upon
signal $\Yt_2$ at user 2 is always a degraded version of the entire received
signal $(\Yt_1,\Yt_2)$. This permits the use of the wiretap channel result of
\cite{Wyner:BSTJ:75}.

\begin{remark}
Although Theorem~\ref{thm:out1} is based on a \emph{degraded} argument, the
outer bound (\ref{eq:Sato}) can be applied to the \emph{general} broadcast
channel with confidential messages.
\end{remark}

\subsection{Sato-Type Outer Bound for the MGBC-CM}

For the Gaussian BC, the family  $\Pc$ is the set of channels
\begin{align*}
\yt_1&=\hv^{H} \xv+\zt_1 \notag\\
\yt_2&=\gv^{H} \xv+\zt_2 
\end{align*}
where $\zt_1$ and $\zt_2$ correspond to arbitrarily correlated, zero-mean,
unit-variance, complex Gaussian random variables. Let $\rho$ denote the
covariance between $\Zt_1$ and $\Zt_2$, i.e,
$$\Cov\bigl(\Zt_1,\Zt_2\bigr)=\rho \quad \text{and} \quad |\rho|^2\le1.$$
Now, the rate region ${\Rc}_{\rm O}\bigl(P_{\Yt_1,\Yt_2|\Xv}, P_{\Xv}\bigr)$ is
a function of the noise covariance $\rho$ and the input covariance matrix
$K_{\Xv}$. We consider a computable Sato-type outer bound for the MGBC-CM in
the following lemma.
\begin{lemma} \label{lem:outG}
Let ${\Rc}_{\rm O}^{\rm MG}(\rho, K_{\Xv})$ denote the union of all rate pairs
$(R_1,R_2)$ satisfying
\begin{align*}
&            & 0 \le R_1&\le f_1(\rho,K_{\Xv}) &\\
& \text{and} & 0 \le R_2&\le f_2(\rho,K_{\Xv})&
\end{align*}
where
\begin{align*}
f_1(\rho,K_{\Xv}) = \min_{\nu \in \mathbb{C}} & \log_2
\frac{(\hv-\nu\gv)^{H} K_{\Xv} (\hv-\nu\gv)+\psi_1(\nu,\rho)}{(1-|\rho|^2)}\notag\\
 f_2(\rho,K_{\Xv})= \min_{\mu \in \mathbb{C}} & \log_2 \frac{(\gv-\mu\hv)^{H}
K_{\Xv} (\gv-\mu\hv)+\psi_2(\mu,\rho)}{(1-|\rho|^2)} \notag\\
\psi_1(\nu,\rho)&=1+|\nu|^2-\nu^{*}\rho-\rho^{*}\nu \notag\\
\text{and} \quad \psi_2(\mu,\rho)&=1+|\mu|^2-\mu^{*}\rho-\rho^{*}\mu.
\end{align*}
For the MGBC-CM, the secrecy capacity region $\Cc^{\rm MG}_{s}$ satisfies
\begin{align*}
\Cc^{\rm MG}_{s} \subseteq \bigcup_{\tr(K_{\Xv})\le P} {\Rc}_{\rm
O}(\rho,K_{\Xv}) 
\end{align*}
for any $0\le|\rho|\le1$.
\end{lemma}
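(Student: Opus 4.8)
The plan is to specialize the general Sato-type outer bound of Theorem~\ref{thm:out1} to the Gaussian channel family $\Pc$ described above, and then to upper bound the two conditional mutual informations $I(\Xv;\Yt_1|\Yt_2)$ and $I(\Xv;\Yt_2|\Yt_1)$ by quantities depending on the input only through its covariance matrix $K_{\Xv}$. Since the right-hand side of (\ref{eq:Sato}) is an \emph{intersection} over $P_{\Yt_1,\Yt_2|\Xv}\in\Pc$, it is enough to retain a single member of $\Pc$, i.e., to fix one value of the noise covariance $\rho$ with $|\rho|\le 1$. This already gives
\[
\Cc^{\rm MG}_{s}\subseteq\bigcup_{P_{\Xv}}\Rc_{\rm O}(\rho,P_{\Xv}),
\]
where $\Rc_{\rm O}(\rho,P_{\Xv})$ is the set of $(R_1,R_2)$ with $R_1\le I(\Xv;\Yt_1|\Yt_2)$ and $R_2\le I(\Xv;\Yt_2|\Yt_1)$ evaluated for the Gaussian channel with noise covariance $\rho$ and a given input $P_{\Xv}$, the latter satisfying the power constraint $\tr(K_{\Xv})\le P$ inherited from the code definition.

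First I would bound $R_1$. Write $I(\Xv;\Yt_1|\Yt_2)=h(\Yt_1|\Yt_2)-h(\Yt_1|\Yt_2,\Xv)=h(\Yt_1|\Yt_2)-h(\Zt_1|\Zt_2)$, where the last equality uses that $(\Zt_1,\Zt_2)$ is independent of $\Xv$. The subtracted term is pinned down by the Gaussian noise, $h(\Zt_1|\Zt_2)=\log_2\!\bigl(\pi e(1-|\rho|^2)\bigr)$. For the remaining term, pick an arbitrary $\nu\in\mathbb{C}$ and observe that translating by a function of the conditioning variable leaves the conditional entropy unchanged, so $h(\Yt_1|\Yt_2)=h(\Yt_1-\nu\Yt_2|\Yt_2)\le h(\Yt_1-\nu\Yt_2)\le\log_2\!\bigl(\pi e\,\Var(\Yt_1-\nu\Yt_2)\bigr)$, the last inequality being the maximum-entropy bound. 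Since $\Yt_1-\nu\Yt_2=(\hv-\nu\gv)^{H}\xv+(\zt_1-\nu\zt_2)$ with $\Var(\zt_1-\nu\zt_2)=1+|\nu|^2-\nu^{*}\rho-\rho^{*}\nu=\psi_1(\nu,\rho)$, we obtain $\Var(\Yt_1-\nu\Yt_2)=(\hv-\nu\gv)^{H}K_{\Xv}(\hv-\nu\gv)+\psi_1(\nu,\rho)$. As this holds for every $\nu$, minimizing over $\nu$ and observing that the $\pi e$ factor cancels against the one in $h(\Zt_1|\Zt_2)$ yields $I(\Xv;\Yt_1|\Yt_2)\le f_1(\rho,K_{\Xv})$. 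Repeating the argument verbatim with $(\hv,\Yt_1,\zt_1)$ and $(\gv,\Yt_2,\zt_2)$ interchanged and $\nu$ replaced by $\mu$ gives $I(\Xv;\Yt_2|\Yt_1)\le f_2(\rho,K_{\Xv})$.

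These two inequalities show $\Rc_{\rm O}(\rho,P_{\Xv})\subseteq\Rc_{\rm O}^{\rm MG}(\rho,K_{\Xv})$ for every input distribution with covariance $K_{\Xv}$. Taking the union over all admissible $P_{\Xv}$, and noting that the channel input constraint forces $\tr(K_{\Xv})\le P$, gives $\bigcup_{P_{\Xv}}\Rc_{\rm O}(\rho,P_{\Xv})\subseteq\bigcup_{\tr(K_{\Xv})\le P}\Rc_{\rm O}^{\rm MG}(\rho,K_{\Xv})$, which is the desired outer bound for each fixed $\rho$. I do not anticipate a genuinely hard step; the only delicate points are (i) verifying that the free parameter $\nu$ (resp.\ $\mu$) may be optimized \emph{inside} the upper bound --- it may, because each choice of $\nu$ is individually a valid bound --- and (ii) bookkeeping of the circularly-symmetric complex-Gaussian entropy formulas so that the $\pi e$ terms cancel cleanly between $h(\Yt_1|\Yt_2)$ and $h(\Zt_1|\Zt_2)$. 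It is worth remarking that the bound is in fact tight for Gaussian $P_{\Xv}$, since then $h(\Yt_1|\Yt_2)$ equals the logarithm of the MMSE variance and $\min_{\nu}\Var(\Yt_1-\nu\Yt_2)$ is precisely that variance; this is why no further optimization over the \emph{shape} of $P_{\Xv}$ is required.
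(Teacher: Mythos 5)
Your proposal is correct and follows essentially the same route the paper indicates: specialize Theorem~\ref{thm:out1} to a single member of the Gaussian family $\Pc$ (fixing $\rho$), write $I(\Xv;\Yt_1|\Yt_2)=h(\Yt_1|\Yt_2)-h(\Zt_1|\Zt_2)$, bound $h(\Yt_1|\Yt_2)\le h(\Yt_1-\nu\Yt_2)$ for arbitrary $\nu$, invoke the maximum-entropy theorem to pass to a bound depending only on $K_{\Xv}$, and minimize over $\nu$ (and symmetrically over $\mu$), exactly as sketched in the remark following Lemma~\ref{lem:outG}. Your bookkeeping with $\pi e$ for the circularly-symmetric complex Gaussian entropies is consistent and the constants cancel as you claim, so no gap remains.
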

\begin{remark}
Lemma~\ref{lem:outG} is based on the fact that Gaussian input distributions
maximize ${\Rc}_{\rm O}\bigl(P_{\Yt_1,\Yt_2|\Xv}, P_{\Xv}\bigr)$ for Gaussian
broadcast channel. To illustrate this point, we consider
\begin{align*}
I(\Xv;\Yt_1|\Yt_2)
&=h(\Yt_1|\Yt_2)-\log_2 (2\pi e)(1-|\rho|^2)\notag\\
&\le h(\Yt_1-\nu \Yt_2)-\log_2 (2\pi e)(1-|\rho|^2). 
\end{align*}
Moreover, the maximum-entropy theorem \cite{Cover} implies that $h(\Yt_1-\nu
\Yt_2)$ is maximized by Gaussian input distributions.
\end{remark}

Finally, we prove that the Sato-type outer bound of Lemma~\ref{lem:outG}
coincides with the secrecy capacity region $\Cc^{\rm MG}_{s}$ by choosing the
parameter $\rho=(\gv^{H}\ev_1)/(\hv^{H}\ev_1).$ A detail proof can be found in
\cite{Liu:IT:2007oct}.

\section{Numerical Examples} \label{sec:ex}

In this section, we study a numerical example to illustrate the secrecy
capacity region of the MGBC-CM. For simplicity, we assume that the Gaussian BC
has real input and output alphabets and the channel attenuation vectors $\hv$
and $\gv$ are also real. Under these conditions, all calculated rate values are
divided by $2$.

\begin{figure}[hbt]
 \centerline{\includegraphics[width=0.9\linewidth,draft=false]{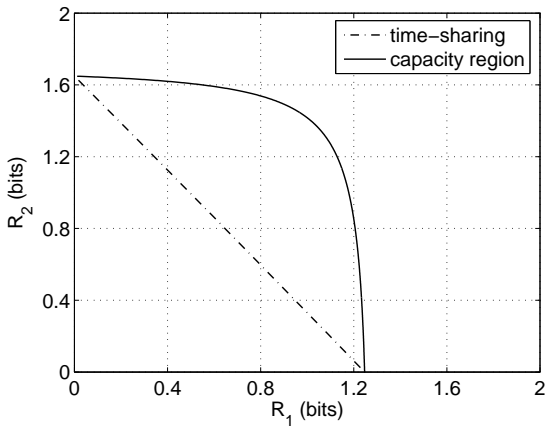}}
\caption{Secrecy capacity region vs. time-sharing secrecy rate region
 for the example MGBC-CM in (\ref{eq:ex1})}
  \label{fig:sim1}
\end{figure}
In particular, we consider the following MGBC-CM
\begin{align}
\left[\begin{matrix}y_1 \\ y_2 \end{matrix}\right] &= \left[\begin{matrix} 1.5
&0 \\ 1.801 & 0.871 \end{matrix}\right] \left[\begin{matrix}x_1 \\ x_2
\end{matrix}\right] +\left[\begin{matrix}z_1 \\ z_2
\end{matrix}\right]\label{eq:ex1}
\end{align}
where $\hv=[1.5, 0]^{T}$, $\gv=[1.801, 0.872]^{T}$, and the total power
constraint is set to $P=10$. Fig.~\ref{fig:sim1} illustrates the secrecy
capacity region for the channel (\ref{eq:ex1}). We observe that even though
each component of the attenuation vector $\hv$ (imposed on user 1) is strictly
less than the corresponding component of $\gv$ (imposed on user 2), both users
can achieve positive rates simultaneously under the information-theoretic
secrecy requirement. Moreover, we compare the secrecy capacity region with the
secrecy rate region achieved by the time-sharing scheme (indicated by the
dash-dot line). Fig.~\ref{fig:sim1} demonstrate that the time-sharing scheme is
strictly suboptimal for providing the secrecy capacity region.


\begin{thebibliography}{10}
\providecommand{\url}[1]{#1} \csname url@samestyle\endcsname
\providecommand{\newblock}{\relax} \providecommand{\bibinfo}[2]{#2}
\providecommand{\BIBentrySTDinterwordspacing}{\spaceskip=0pt\relax}
\providecommand{\BIBentryALTinterwordstretchfactor}{4}
\providecommand{\BIBentryALTinterwordspacing}{\spaceskip=\fontdimen2\font plus
\BIBentryALTinterwordstretchfactor\fontdimen3\font minus
  \fontdimen4\font\relax}
\providecommand{\BIBforeignlanguage}[2]{{%
\expandafter\ifx\csname l@#1\endcsname\relax
\typeout{** WARNING: IEEEtran.bst: No hyphenation pattern has been}%
\typeout{** loaded for the language `#1'. Using the pattern for}%
\typeout{** the default language instead.}%
\else \language=\csname l@#1\endcsname \fi #2}}
\providecommand{\BIBdecl}{\relax} \BIBdecl

\bibitem{Wyner:BSTJ:75}
A.~D. Wyner, ``The wire-tap channel,'' \emph{Bell Syst. Tech. J.}, vol.~54,
  no.~8, pp. 1355--138, Oct. 1975.

\bibitem{Csiszar:IT:78}
I.~Csisz{\'{a}}r and J.~K{\"{o}}rner, ``Broadcast channels with confidential
  messages,'' \emph{IEEE Trans. Inf. Theory}, vol.~24, no.~3, pp. 339--348, May
  1978.

\bibitem{Csiszar:IT:04}
I.~Csisz{\'{a}}r and P.~Narayan, ``Secrecy capacities for multiple terminal,''
  \emph{IEEE Trans. Inf. Theory}, vol.~50, no.~12, pp. 3047--3061, Dec 2004.

\bibitem{Tekin:ITA:07}
E.~Tekin and A.~Yener, ``The multiple access wire-tap channel: Wireless secrecy
  and cooperative jamming,'' in \emph{Proc. Workshop on Information Theory and
  Applications}, San Diego, CA, Jan. 2007.

\bibitem{Liang06it}
Y.~Liang and H.~V. Poor, ``Multiple access channels with confidential
  messages,'' \emph{IEEE Trans. Inf. Theory}, vol.~54, no.~3, pp. 976--1002,
  Mar. 2008.

\bibitem{Liu07it}
R.~Liu, I.~Maric, P.~Spasojevic, and R.~Yates, ``Discrete memoryless
  interference and broadcast channels with confidential messages: {S}ecrecy
  rate regions,'' \emph{IEEE Trans. Inf. Theory}, vol.~54, no.~6, Jun. 2008, to
  appear.

\bibitem{Lai:IT:06}
L.~Lai and H.~{El Gamal}, ``The relay-eavesdropper channel: Cooperation for
  secrecy,'' \emph{IEEE Trans. Inf. Theory}, submitted, Dec. 2006.

\bibitem{Elza:CISS:07}
M.~Yuksel and E.~Erkip, ``The relay channel with a wiretapper,'' in \emph{Proc.
  Forty-First Annual Conference on Information Sciences and Systems},
  Baltimore, MD, USA, Mar. 2007.

\bibitem{Barros:ISIT:06}
J.~Barros and M.~{R. D. Rodrigues}, ``Secrecy capacity of wireless channels,''
  in \emph{Proc. IEEE Int. Symp. Information Theory}, Seattle, WA, Jul. 2006,
  pp. 356--360.

\bibitem{Liang06novit}
Y.~Liang, H.~V. Poor, and {S. Shamai~(Shitz)}, ``Secure communication over
  fading channels,'' \emph{IEEE Trans. Inf. Theory}, vol.~54, no.~6, Jun. 2008,
  to appear.

\bibitem{Gopala:ISIT:07}
P.~Gopala, L.~Lai, and H.~{El~Gamal}, ``On the secrecy capacity of fading
  channels,'' in \emph{Proc. IEEE Int. Symp. Information Theory}, Nice, France,
  June 24-29, 2007.

\bibitem{Li:CISS:07}
Z.~Li, W.~Trappe, and R.~Yates, ``Secret communication via multi-antenna
  transmission,'' in \emph{Proc. Forty-First Annual Conference on Information
  Sciences and Systems}, Baltimore, MD, USA, Mar. 2007.

\bibitem{Liu:WITS:07}
R.~Liu and H.~V. Poor, ``Multiple antenna secure broadcast over wireless
  networks,'' in \emph{Proc. First International Workshop on Information Theory
  for Sensor Networks}, Santa Fe, NM, June 18-20, 2007, pp. 125--139.

\bibitem{Liu:IT:2007oct}
------, ``Secrecy capacity region of a multi-antenna {G}aussian broadcast
  channel with confidential messages,'' \emph{IEEE Trans. Inf. Theory},
  submitted, October 2007.

\bibitem{Wornell-IT-2007}
A.~Khisti and G.~Wornell, ``Secure transmission with multiple antennas: The
  {MISOME} wiretap channel,'' \emph{IEEE Trans. Inf. Theory}, submitted, August
  2007.

\bibitem{shafiee-IT-2007}
S.~Shafiee, N.~Liu, and S.~Ulukus, ``Towards the secrecy capacity of the
  {G}aussian {MIMO} wire-tap channel: The 2-2-1 channel,'' \emph{IEEE Trans.
  Inf. Theory}, submitted, September 2007.

\bibitem{Cover}
T.~Cover and J.~Thomas, \emph{Elements of Information Theory}.\hskip 1em plus
  0.5em minus 0.4em\relax New York: John Wiley Sons, Inc., 1991.

\bibitem{Cheong:IT:78}
S.~K. Leung-Yan-Cheong and M.~E. Hellman, ``The {G}aussian wire-tap channel,''
  \emph{IEEE Trans. Inf. Theory}, vol.~24, no.~4, pp. 51--456, Jul. 1978.

\bibitem{Wornell:ISIT:07}
A.~Khisti, G.~Wornell, A.~Wiesel, and Y.~Eldar, ``On the {G}aussian {MIMO}
  wiretap channel,'' in \emph{Proc. IEEE Int. Symp. Information Theory}, Nice,
  France, June 24-29, 2007.

\bibitem{Ulukus:ISIT:07}
S.~Shafiee and S.~Ulukus, ``Achievable rates in {G}aussian {MISO} channels with
  secrecy constraints,'' in \emph{Proc. IEEE Int. Symp. Information Theory},
  Nice, France, June 24-29, 2007.

\end{thebibliography}
\end{document}